\setlist{noitemsep} 		
\newcommand{\T}{^{\mathsf{T}}}
\newcommand{\B}[1]{\if#1\relax\bm{#1}\else\mathbf{#1}\fi} 
\newcommand{\R}[1]{\mathrm{#1}}						      
\newcommand{\C}[1]{\mathcal{#1}}	
\newcommand{\BB}[1]{\mathbb{#1}}
\newtheoremstyle{mythmstyle}
{1ex}
{1ex}
{\itshape}
{}
{\bfseries}
{.}
{ }
{\thmname{#1}\thmnumber{ #2}\thmnote{ (#3)}}  
\newtheoremstyle{myremark} 
{}
{}
{}
{}
{\bfseries}
{.}
{.5em}
{}
\theoremstyle{mythmstyle}			
\newtheorem{theorem}{Theorem}
\newtheorem{proposition}[theorem]{Proposition}
\newtheorem{definition}[theorem]{Definition}
\newtheorem{lemma}[theorem]{Lemma}
\theoremstyle{myremark} 
\newtheorem{remark}[theorem]{Remark} 
\title{\LARGE \bf
    Distributed Discontinuous Coupling for Convergence\\in Networks of Heterogeneous Nonlinear Systems
}
\author{Marco Coraggio, Pietro DeLellis and Mario di Bernardo$^{1}$%
\thanks{$^{1}$All the authors are with the Department of Electrical Engineering and Information Technology, University of Naples Federico II, Italy {\tt\small \{marco.coraggio, pietro.delellis, mario.dibernardo\}@unina.it}.}%
\thanks{*The authors wish to acknowledge support from the research project PRIN 2017 ``Advanced Network Control of Future Smart Grids'' funded by the Italian Ministry of University and Research (2020-2023) --- {\tt\small http://vectors.dieti.unina.it}. P. DeLellis also wishes to thanks the University of Naples and Compagnia di San Paolo, Istituto Banco di Napoli, Fondazione for supporting his research under programme ``STAR 2018, project ACROSS.}%
}
\begin{document}

\maketitle
\thispagestyle{empty}
\pagestyle{empty}


\begin{abstract}
Synchronization is a crucial phenomenon in many natural and artificial complex network systems.
Applications include neuronal networks, formation control and coordination in robotics, and frequency synchronization in electrical power grids.
In this paper, we propose the use of a distributed discontinuous coupling protocol to achieve convergence and synchronization in networks of non-identical nonlinear dynamical systems.
We show that the synchronous dynamics is a solution to the average of the nodes' vector fields, and derive analytical estimates of the critical coupling gains required to achieve convergence. Numerical simulations are used to illustrate and validate the theoretical results.
\end{abstract}


\section{Introduction}
\label{sec:section_1}
Coordination, synchronization, formation control and platooning are all examples of emerging phenomena that need to be carefully controlled, maintained, and induced in many applications. 
Examples include frequency synchronization in power grids, formation control and coordination in robotics, cluster synchronization in neuronal networks, and coordination in humans performing joint tasks, e.g \cite{tang2014synchronization, dorfler2014synchronization,oh2015survey}.
In all of these problems, agents are hardly identical, as is often assumed in the literature on complex networks, but are heterogeneous and affected by noise and disturbances. 

The problem of studying the collective behaviour of sets of diffusively coupled non-identical systems was first discussed in \cite{hill2008global} and  later in  \cite{he2013synchronization, delellis2015convergence,montenbruck2015practical, panteley2017synchronization}. 
The emergence of bounded convergence was proven under different conditions showing that, unless the different agents share a common solution (when decoupled) \cite{xiang2007v,zhao2010passivity,zhao2011stability}, or specific symmetries exist in the network structure (see e.g. \cite{zhao2010synchronization}), asymptotic synchronization cannot be achieved, since a unique synchronization manifold does not exist.
Occurrence of partial or cluster synchronization was observed when groups of identical agents can be identified in the ensemble \cite{wang2013cluster}.
Also, a collective behaviour, akin to a ``chimera state'' (where some systems synchronize perfectly, while the others evolve incoherently) \cite{abrams2004chimera}, was investigated in networks of heterogeneous oscillators \cite{laing2009chimera}.
Further results on networks of heterogeneous systems are available in  \cite{seyboth2015robust,grip2012output,chopra2008output} where output- rather than state-synchronization is studied also in the presence of distributed feedback control laws facilitating its emergence.

A crucial open problem is therefore to prove asymptotic convergence in networks of heterogeneous systems with generic structures.
So far, two solutions were proposed that rely on the introduction in the network of some external control actions.
For example, an exogenous input was added onto each node in the network in \cite{lee2013integral,yang2013finite} to achieve this goal, while the use of a self-tuning proportional integral controller was investigated numerically in \cite{burbano2016self}. 

The goal of this paper is to propose an alternative solution to the problem of achieving global asymptotic (rather than bounded) convergence in networks of heterogeneous nonlinear systems.
Differently from previous literature, we prove that, by adding a discontinuous coupling law to the more traditional linear diffusive one, asymptotic convergence can be formally proved, even when the nodes are heterogeneous and do not share a common solution. 
We also show that the synchronous trajectory is a solution to the average of all the individual vector fields of the nodes, and give analytical estimates of the critical values of the coupling gains that guarantee asymptotic synchronization is achieved.
The theoretical derivations are complemented by a set of numerical simulations that show the effectiveness of the proposed approach.
We wish to emphasise that in previous work \cite{cortes2006finite,hui2010finite,liu2015finite} discontinuous communication protocols were used to drive networks of integrators to consensus, but never for networks of generic heterogeneous nonlinear systems.

\section{Problem description and preliminaries}

We consider a generic network of interconnected heterogeneous nonlinear systems of the form 
\begin{equation}\label{eq:network}
\begin{dcases}
\dot{\B{x}}_i(t) = \B{f}_i(\B{x}_i; t) + \B{g}_i(\B{x}; t) \B{u}_i(\B{x}_i; t), \\
\B{y}_i(t) = \B{\phi}_i(\B{x}_i; t),
\end{dcases}
\quad i = 1, \dots, N,
\end{equation}
where $\B{x}_i \in \BB{R}^n$, $\B{u}_i \in \BB{R}^m$, $\B{y}_i \in \BB{R}^l$. 
For the sake of simplicity, we assume that $l = m = n$, $\B{\phi}_i(\B{x}_i; t) = \B{x}_i$ and $\B{g}_i(\B{x}_i; t) = \B{I}_n$, with $\B{I}_n$ being the $n$-dimensional identity matrix.

\textit{Control objective.} \quad 
We seek a distributed coupling protocol $\B{u}_i$ that, under suitable assumptions on the vector fields of the agents and on the network structure, drives all nodes towards \emph{global asymptotic synchronization}, that is, it guarantees that, for all initial conditions $\B{x}_i(t=0) \in\BB{R}^n$, $i=1, \ldots, N$, 
\[
\lim_{t \rightarrow +\infty} \left\lVert \B{x}_i(t) - \B{x}_j(t) \right\rVert = 0,  
\quad i, j = 1, \dots, N,
\]
where $\left\lVert \cdot \right\rVert_p$ is the $p$-norm operator, with $p=2$ if it is omitted.

\textit{Control design.} \quad 
To achieve the control objective stated above, we will show that, under certain conditions, asymptotic convergence is guaranteed by the following distributed coupling law:
\begin{equation}\label{eq:diffusive_discontinuous_coupling}
\B{u}_i = - c \sum_{j=1}^N L_{ij} \B{\Gamma} \left( \B{x}_j - \B{x}_i \right)
- c_\R{d} \sum_{j=1}^N L_{ij}^\R{d} \B{\Gamma}_\R{d} \R{sign}\left( \B{x}_j - \B{x}_i \right),
\end{equation}
where $L_{ij}, L_{ij}^\R{d}$ are the $(i,j)$-th elements of the Laplacian matrices $\B{L}, \B{L}_\R{d}$ describing two undirected unweighted graphs, $\C{G} = (\C{V}, \C{E})$ and $\C{G}_\R{d} = (\C{V}, \C{E}_\R{d})$; $\C{V}$ being the set of vertices, and $\C{E}$, $\C{E}_\R{d}$ the sets of edges.
The matrices $\B{\Gamma}, \B{\Gamma}^\R{d} \in \BB{R}^{n \times n}$, also known as \emph{inner coupling matrices}, are assumed to be positive semi-definite.
Finally, the sign of a vector is to be intended as $\R{sign}(\B{v}) = [ \R{sign}(v_1) \ \cdots \ \R{sign}(v_n) ]\T \in \BB{Z}^n$, for $\B{v} \in \BB{R}^n$.

\textit{Preliminary definitions and lemmas.} \quad
We define the state average $\tilde{\B{x}} \triangleq \frac{1}{N} \sum_{i=1}^N \B{x}_i$ and the synchronization errors $\B{e}_i \triangleq \B{x}_i - \tilde{\B{x}}$, for $i = 1, \dots, N$, and
introduce the stack vectors 
$\bar{\B{x}} \triangleq [ \B{x}_1\T \ \cdots \ \B{x}_N\T ]\T$, 
$\bar{\B{u}} \triangleq [ \B{u}_1\T \ \cdots \ \B{u}_N\T ]\T$, and 
$\bar{\B{y}} \triangleq [ \B{y}_1\T \ \cdots \ \B{y}_N\T ]\T$.
We denote a closed ball about some point $\B{v}$ of radius $r$ as $\C{B}_r^\R{c}(\B{v})$, dropping the argument when $\B{v}$ is the origin.

\begin{definition}[\hspace{-0.01cm}\cite{coraggio2019achieving}]\label{def:mu_infinity_minus}
    Given a matrix $\B{A} \in \BB{R}^{n \times n}$, we define the quantity $\mu_\infty^-(\B{A})$ as
    \begin{equation}
    \mu_\infty^-(\B{A}) \triangleq \min_{i = 1, \dots, n} \Bigg( A_{ii} - \sum_{j = 1, j \ne i}^n \left\lvert A_{ij} \right\rvert \Bigg).
    \label{eq:mu_infty-}
    \end{equation}
\end{definition}

\begin{definition}[QUADness \cite{delellis2011quad}]\label{def:QUAD}
    A vector field $\B{f} : \BB{R}^n \times \BB{R}_{\ge 0} \rightarrow \BB{R}^n$ is said to be \emph{QUAD($\B{P}$, $\B{Q}$)} if there exist matrices $\B{P}, \B{Q} \in \BB{R}^{n \times n}$ such that, for all $\B{v}_1, \B{v}_2 \in \BB{R}^n$, $t \in \BB{R}_{\ge 0}$,
    \begin{equation*}
    \left( \B{v}_1 - \B{v}_2 \right)\T \B{P} \left[ \B{f}(\B{v}_1; t) - \B{f}(\B{v}_2; t) \right] \le
    \left( \B{v}_1 - \B{v}_2 \right)\T \B{Q} \left( \B{v}_1 - \B{v}_2 \right).
    \end{equation*}
\end{definition}
\begin{lemma}[\hspace{-0.01cm}\cite{gallegos2015fractional}]\label{lem:modified_barbalat}
    Let $f$ be a scalar non-negative uniformly continuous function of time, and let $C > 0$. 
    If, for all $t \ge 0$, $\int_0^t f(\tau) \ \R{d}\tau < C$, then $\lim_{t \rightarrow +\infty} f(t) = 0$.
\end{lemma}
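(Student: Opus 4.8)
The plan is to prove the result by contradiction, exploiting the fact that the uniform boundedness of the integral forces the ``tail mass'' of $f$ to vanish, while uniform continuity prevents $f$ from remaining bounded away from zero without accumulating integral mass at a positive rate.

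First I would observe that, since $f$ is non-negative, the map $t \mapsto \int_0^t f(\tau)\,\R{d}\tau$ is non-decreasing; being also bounded above by $C$, it converges to a finite limit $L \in [0, C]$ as $t \to +\infty$. By the Cauchy criterion applied to this limit, for every $\eta > 0$ there exists $T_\eta \ge 0$ such that $\int_{t_1}^{t_2} f(\tau)\,\R{d}\tau < \eta$ for all $t_2 > t_1 \ge T_\eta$. This is the key consequence of the hypothesis: the contribution of $f$ over any interval sufficiently far out in time is arbitrarily small.

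Next, suppose for contradiction that $f(t)$ does not tend to $0$. Then there exist $\varepsilon > 0$ and an increasing sequence $t_k \to +\infty$ with $f(t_k) \ge \varepsilon$ for all $k$. By uniform continuity, fix $\delta > 0$ such that $\lvert f(t) - f(s) \rvert < \varepsilon/2$ whenever $\lvert t - s \rvert \le \delta$; hence $f(\tau) \ge \varepsilon/2$ for every $\tau \in [t_k, t_k + \delta]$, so that $\int_{t_k}^{t_k + \delta} f(\tau)\,\R{d}\tau \ge \varepsilon \delta / 2$. Passing to a subsequence I may assume the intervals $[t_k, t_k + \delta]$ are pairwise disjoint and that $t_1 \ge T_\eta$ for $\eta \triangleq \varepsilon\delta/2$. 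Then the interval $[t_1, t_1 + \delta]$ alone already violates the Cauchy bound $\int_{t_1}^{t_1+\delta} f < \eta$, yielding the contradiction; equivalently, summing over $k$ makes $\int_0^t f(\tau)\,\R{d}\tau$ exceed $C$. Therefore $\lim_{t \to +\infty} f(t) = 0$.

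I expect the only delicate point to be the bookkeeping of the interval selection, namely ensuring the $[t_k, t_k + \delta]$ can be taken pairwise disjoint and beyond $T_\eta$; this is routine once $\delta$ is fixed, since $t_k \to +\infty$. The essential mechanism is that uniform continuity converts a persistent positive value of $f$ into a persistent positive rate of integral accumulation, which is incompatible with the a priori bound $C$.
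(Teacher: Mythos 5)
Your proof is correct. The paper does not prove this lemma itself --- it imports it from the cited reference \cite{gallegos2015fractional} --- so there is no in-paper argument to compare against; your reasoning is the standard Barbalat-type argument (monotone bounded integral $\Rightarrow$ Cauchy tail condition, then uniform continuity turns a persistently recurring value $f(t_k)\ge\varepsilon$ into an integral contribution of at least $\varepsilon\delta/2$ over $[t_k,t_k+\delta]$, contradicting the tail bound), and every step, including the use of non-negativity to get monotonicity of $t\mapsto\int_0^t f$, is sound.
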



\begin{definition}[Uniform asymptotic boundedness]\label{def:asymptotic_boundedness} 
A nonlinear system of the form \eqref{eq:network} with a given input function $\B{u}_i(\B{x}_i;t)$ is \emph{uniformly asymptotically bounded to $\C{B}_r^\R{c}$} if there exists $r \in \BB{R}_{>0}$ such that, for all initial conditions,
    \begin{equation}
    \mathop{\lim \sup}_{t \rightarrow + \infty} \left\lVert \B{x}_i(t) \right\rVert \le r.
    \end{equation}
\end{definition}

\begin{definition}[Uniform ultimate boundedness]\label{def:ultimate_boundedness}
A nonlinear system of the form \eqref{eq:network} with a given input function $\B{u}_i(\B{x}_i;t)$ is \emph{uniformly ultimately bounded to $\C{B}_r^\R{c}$}, with $r \in \BB{R}_{>0}$, if there exists a function  $T : \BB{R}^{n} \rightarrow [0, +\infty[ {}$ such that 
    \begin{equation}
    \forall t \ge T(\B{x}_i(0)), \quad
    \left\lVert \B{x}_i(t) \right\rVert \le r.
    \end{equation}
\end{definition}
It is important to remark that if a dynamical system is uniformly asymptotically bounded to $\C{B}_r^\R{c}$, then it is also uniformly ultimately bounded to $\C{B}_{r^+}^\R{c}$, for any $r^+ > r$.





Next, we extend the concept of semipassivity \cite{pogromsky1999diffusion} to nonlinear systems in the presence of a discontinuous input by adapting the definition of passivity for non-smooth systems in \cite{nakakuki2006remark}.%
\footnote{In Definition \ref{def:semipassivity_discontinuous}, to ensure the existence of a solution, we assume the Filippov vector field defining the system is locally bounded, takes nonempty, compact, and convex values and is upper-semicontinuous; \cite[Proposition S2]{cortes2008discontinuous}.%
}
\begin{definition}[Semipassivity with a discontinuous input]\label{def:semipassivity_discontinuous}
    A nonlinear system of the form (\ref{eq:network}) subject to a discontinuous input $\B{u}_i(\B{x}_i,t)$ in $\B{x}_i$ is \emph{semipassive} if the following conditions hold:
    \begin{enumerate}[(a)]
        \item there exist $\rho_i > 0$, a continuous  function $\alpha_i : {} [\rho_i, +\infty[ {} \rightarrow \BB{R}_{\ge 0}$, and a continuous function $h_i : \BB{R}^n \rightarrow \BB{R}$, termed as the \emph{stability component}, such that
      \begin{equation}\label{eq:stability_component}
        h_i(\B{x}_i) \ge \alpha_i(\left\lVert \B{x}_i \right\rVert) \ge 0,
        \quad \text{if } \left\lVert \B{x}_i \right\rVert \ge \rho_i;
        \end{equation}    
        \item there exists a continuous non-negative \emph{storage function} $V_i : \BB{R}^n \rightarrow \BB{R}_{\ge 0}$ such that $V_i(\B{0}) = 0$ and
        \begin{equation}
        V_i(\B{x}_i(t)) - V_i(\B{x}_i(t_0)) \le p_i(t; \B{x}_i(t_0)),
        \end{equation}
        where $p_i(t; \B{x}_i(t_0))$ is the Filippov solution at time $t$, starting from initial condition $p_i(t_0; \B{x}_i(t_0)) = 0$, given $\B{x}_i(t_0)$, to the differential equation
        \begin{equation*}
        \dot{p}_i(t; \B{x}_i(t)) = \left( \B{y}_i (\B{x}_i(t)) \right)\T \B{u}_i(\B{x}_i(t)) - h_i(\B{x}_i(t)).
        \end{equation*}
    \end{enumerate}
    Moreover, if the function $\alpha_i$ is strictly positive for $\left\lVert \B{x}_i \right\rVert > \rho_i$, then \eqref{eq:network} is said to be \emph{strictly semipassive}.
    Also, if $\alpha_i$ is radially unbounded and increasing, then  \eqref{eq:network} is said to be \emph{strongly strictly semipassive}.
\end{definition}


\section{Boundedness of heterogeneous networks}

In this Section, we prove uniform asymptotic boundedness by exploiting Lemma \ref{lem:fix_pogromsky_discontinuous} (see Appendix) and following the steps in \cite{pogromsky1999diffusion}.
Then, in Section \ref{sec:asymptotic_convergence}, we move to proving asymptotic convergence. 

%
%
%


\begin{proposition}\label{prp:pogromsky_for_disc}
Consider network \eqref{eq:network}-\eqref{eq:diffusive_discontinuous_coupling}.
If 
\begin{enumerate}[(a)]
    \item all systems in \eqref{eq:network} are strongly strictly semipassive, with stability components $h_i$, $i = 1, \dots, N$;
    \item all systems in \eqref{eq:network} have radially unbounded storage functions $V_i$;
    \item $c \ge 0$, $c_\R{d} \ge 0$, $\R{sym}( \B{\Gamma} ) \ge 0$, and $\mu_\infty^-(\B{\Gamma}_{\R{d}}) \ge 0$;
\end{enumerate} 
then \eqref{eq:network}-\eqref{eq:diffusive_discontinuous_coupling} is uniformly asymptotically bounded.
\end{proposition}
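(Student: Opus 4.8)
The plan is to transpose the classical semipassivity argument of \cite{pogromsky1999diffusion} to the present non-smooth setting, working with the aggregate storage function $V(\bar{\B{x}}) \triangleq \sum_{i=1}^N V_i(\B{x}_i)$. By assumption (b) each $V_i$ is radially unbounded, hence so is $V$. Reading Definition \ref{def:semipassivity_discontinuous} along Filippov solutions, each node obeys the differential inequality $\dot{V}_i \le \B{y}_i\T \B{u}_i - h_i(\B{x}_i)$, and since $\B{y}_i = \B{x}_i$ this becomes $\dot{V}_i \le \B{x}_i\T \B{u}_i - h_i(\B{x}_i)$. Summing over $i$ reduces the whole problem to showing that the aggregate supply rate $\sum_{i=1}^N \B{x}_i\T \B{u}_i$ is non-positive, which would yield $\dot{V} \le -\sum_{i=1}^N h_i(\B{x}_i)$.

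First I would split $\sum_i \B{x}_i\T \B{u}_i$ into its linear and discontinuous parts. For the linear part, using that each Laplacian has zero row sums, $\sum_j L_{ij} = 0$, the term $-c\sum_i \B{x}_i\T \sum_j L_{ij}\B{\Gamma}(\B{x}_j - \B{x}_i)$ collapses to $-c\,\bar{\B{x}}\T(\B{L}\otimes\B{\Gamma})\bar{\B{x}}$. Since the quadratic form sees only the symmetric part and $\R{sym}(\B{L}\otimes\B{\Gamma}) = \B{L}\otimes\R{sym}(\B{\Gamma})$ (the graph being undirected, $\B{L}=\B{L}\T\ge 0$), with $\R{sym}(\B{\Gamma})\ge 0$ by assumption (c), this contribution is $\le 0$. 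For the discontinuous part, I would use the symmetry of $\C{G}_\R{d}$ and the oddness of $\R{sign}$ to group the double sum into per-edge contributions, each of the form $-c_\R{d}\,(\B{x}_j - \B{x}_i)\T\B{\Gamma}_\R{d}\,\R{sign}(\B{x}_j - \B{x}_i)$ up to a non-negative edge weight.

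The crux is then the scalar estimate: for every $\B{w}\in\BB{R}^n$, $\B{w}\T\B{\Gamma}_\R{d}\,\R{sign}(\B{w}) \ge \mu_\infty^-(\B{\Gamma}_\R{d})\,\lVert\B{w}\rVert_1 \ge 0$ under assumption (c). This I would obtain by separating diagonal from off-diagonal terms: the diagonal gives exactly $\sum_k (\Gamma_\R{d})_{kk}\lvert w_k\rvert$, while each off-diagonal term is bounded below by $-\lvert w_k\rvert\,\lvert (\Gamma_\R{d})_{kl}\rvert$ because $\lvert\R{sign}(w_l)\rvert\le 1$; collecting the coefficient of $\lvert w_k\rvert$ reproduces precisely the quantity in Definition \ref{def:mu_infinity_minus}. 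Each edge contribution is therefore non-negative, so the discontinuous part of the supply rate is $\le 0$. With both coupling terms non-positive I obtain $\dot{V} \le -\sum_{i=1}^N h_i(\B{x}_i)$.

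To close the argument, assumption (a) gives $h_i(\B{x}_i)\ge\alpha_i(\lVert\B{x}_i\rVert)$ for $\lVert\B{x}_i\rVert\ge\rho_i$ with $\alpha_i$ radially unbounded and increasing, so $\dot{V}$ is strictly negative outside a sufficiently large compact set; combined with the radial unboundedness of $V$, this matches the hypotheses of the non-smooth Pogromsky-type Lemma \ref{lem:fix_pogromsky_discontinuous} in the Appendix, which then yields uniform asymptotic boundedness of \eqref{eq:network}-\eqref{eq:diffusive_discontinuous_coupling}. The hard part will be the non-smooth bookkeeping: since each $\B{u}_i$ is discontinuous, every inequality above must be read via the set-valued (Filippov) derivative, and one must verify that the sign estimate survives when $\R{sign}$ is replaced by its set-valued closure on the switching surfaces where components of $\B{x}_j - \B{x}_i$ vanish. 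Invoking Lemma \ref{lem:fix_pogromsky_discontinuous} together with the regularity conditions in the footnote to Definition \ref{def:semipassivity_discontinuous} (local boundedness and nonempty, compact, convex, upper-semicontinuous Filippov values, guaranteeing existence of solutions) is what makes this rigorous.
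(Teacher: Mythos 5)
Your handling of the coupling terms is sound and in fact reproduces the content of the paper's Appendix Lemma \ref{lem:fix_pogromsky_discontinuous}: the per-edge estimate $\B{w}\T\B{\Gamma}_\R{d}\,\R{sign}(\B{w}) \ge \mu_\infty^-(\B{\Gamma}_\R{d})\lVert\B{w}\rVert_1$ is exactly the role played by \cite[Lemma 9]{coraggio2019achieving} there, and the non-negativity of the diffusive quadratic form is argued the same way. The genuine gap is in your last step. You arrive at the dissipation inequality $\dot{V} \le -\sum_i h_i(\B{x}_i)$ with the right-hand side strictly negative outside a compact set, and then assert that this ``matches the hypotheses'' of Lemma \ref{lem:fix_pogromsky_discontinuous}, ``which then yields uniform asymptotic boundedness.'' It does not: that lemma only supplies the radius $\bar{\rho}$ and the increasing comparison function $\bar{\alpha}$ with $\bar{q}(\bar{\B{x}}) \ge \bar{\alpha}(\lVert\bar{\B{x}}\rVert)$ for $\lVert\bar{\B{x}}\rVert \ge \bar{\rho}$; it says nothing about trajectories. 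The passage from that pointwise bound to uniform asymptotic boundedness is the actual body of the Proposition's proof, and you have omitted it entirely.

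Concretely, two things are missing. First, because Definition \ref{def:semipassivity_discontinuous} only requires the $V_i$ to be continuous and states the decrease condition in integral form via the Filippov solution $p_i$, you cannot simply write $\dot{V}_i \le \B{y}_i\T\B{u}_i - h_i$; the paper instead works with $\bar{V}(\bar{\B{x}}(t)) - \bar{V}(\bar{\B{x}}(0)) \le \bar{p}(t;\bar{\B{x}}(0))$, introduces modified functions $\widetilde{V}$ and $\tilde{\alpha}$ that vanish on a ball of radius $\tilde{\rho}$ chosen so that a sublevel set of $\bar{V}$ is sandwiched between the two balls, and partitions $[0,t]$ at the crossings of $\lVert\bar{\B{x}}\rVert = \tilde{\rho}$ to show $\widetilde{V}$ is non-increasing; this is what delivers Lagrange stability. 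Second, the asymptotic part of the claim ($\limsup_{t\to\infty}\lVert\bar{\B{x}}(t)\rVert \le \tilde{\rho}$, not merely boundedness or eventual entry into a set) is obtained by a Barbalat-type argument: the paper builds the shifted function $\tilde{\alpha}'$, which is continuous, uniformly continuous along the now-bounded trajectory, has bounded integral by the decrease estimate, and hence tends to zero by Lemma \ref{lem:modified_barbalat}. Neither the sublevel-set/time-partition construction nor the Barbalat step appears in your proposal, and they cannot be absorbed into the appendix lemma you cite.
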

\begin{proof}
    Consider the function $\bar{V} : \BB{R}^{Nn} \rightarrow \BB{R}_{\ge 0}$ given by
    \begin{equation}\label{eq:V_bar}
    \bar{V}(\bar{\B{x}}) \triangleq V_{1}\left(\B{x}_{1}\right)+ \ldots+ V_{N}\left(\B{x}_{N}\right).
    \end{equation}
    Since $\bar{V}$ is the sum of radially unbounded functions, it is radially unbounded itself.
    From \eqref{eq:V_bar} and Definition \ref{def:semipassivity_discontinuous}, we have
    \begin{equation}\label{eq:network_semipassivity}
    \bar{V}(\bar{\B{x}}(t)) - \bar{V}(\bar{\B{x}}(0)) \le \bar{p}(t; \bar{\B{x}}(0)),
    \end{equation}
    where $\bar{p}(t; \bar{\B{x}}(t_0)) \triangleq \sum_{i=1}^N p_i(t; \B{x}_i(t_0))$. 
    
    \begin{figure}[t]
        \centering
        \subfloat[]{\includegraphics[scale=0.9]{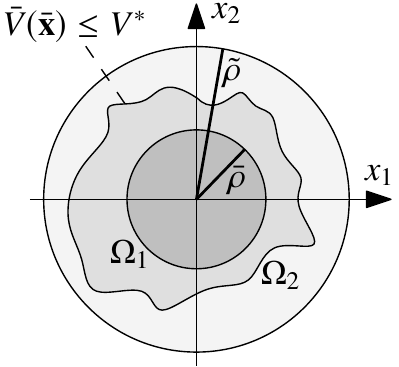}
        \label{fig:sets_V}} \hfill
        \subfloat[]{\includegraphics[scale=0.9]{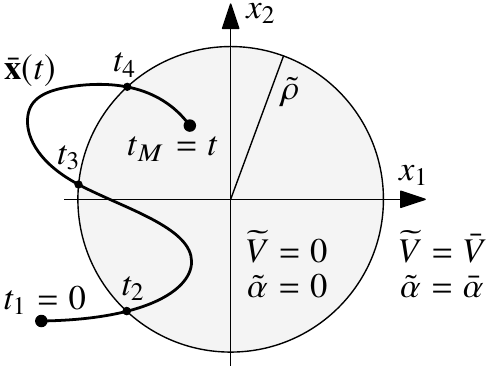}
        \label{fig:time_instants}}%
        \caption{Example of sets (a) and time instants (b) described in the proof of Proposition \ref{prp:pogromsky_for_disc} with $n = 1$, $N = 2$.}
        \label{fig:figures_proof}
    \end{figure}
    
    Note that, given the hypotheses of this Proposition, Lemma \ref{lem:fix_pogromsky_discontinuous} (see Appendix) holds.
    Then, consider the set $\Omega_1 \triangleq \{ \bar{\B{x}} \mid \left\lVert \bar{\B{x}} \right\rVert \le \bar{\rho} \}$, which is compact and where $\bar{\rho}$ is given by the Lemma. 
    Since $\bar{V}$ is continuous and radially unbounded, we can find a scalar $V^* > 0$ such that the compact set $\Omega_2 \triangleq \{ \bar{\B{x}} \mid \bar{V}(\bar{\B{x}}) \leq V^*\}$ fulfils $\Omega_2 \supset \Omega_1$. 
    As $\Omega_2$ is compact, there exists a closed ball of the origin with radius $\tilde{\rho} \ge \bar{\rho}$ that contains $\Omega_2$; see the sketch diagram reported in Fig.~\ref{fig:sets_V} for the case that $n=1$, $N=2$.
    Now, we define the functions
    \begin{equation}\label{eq:V_prime}
    \widetilde{V}(\bar{\B{x}}) \triangleq 
    \begin{dcases}
    0, & \text {if } \left\lVert \bar{\B{x}} \right\rVert \leq \tilde{\rho}, \\
    \bar{V}(\bar{\B{x}}), & \text{otherwise},
    \end{dcases}
    \end{equation}
    \begin{equation}\label{eq:proof_step_14}
    \tilde{\alpha}(\left\lVert \bar{\B{x}} \right\rVert) \triangleq 
    \begin{dcases}
    0, & \text {if } \left\lVert \bar{\B{x}} \right\rVert \leq \tilde{\rho}, \\
    \bar{\alpha}(\left\lVert \bar{\B{x}} \right\rVert), & \text{otherwise}.
    \end{dcases}
    \end{equation}
    Next, we divide the generic time interval $[0, t]$ in $M-1$ contiguous sub-intervals $[t_1=0, t_2], \dots, [t_{M-1}, t_M=t]$, where $t_2 \dots, t_{M-1}$ are the time instants at which $\bar{\B{x}}$ crosses transversely the level set where $\left\lVert \bar{\B{x}} \right\rVert = \tilde{\rho}$ (see Fig. \ref{fig:time_instants}). 
    With this partition of the time interval $[0, t]$ we have that, in each sub-interval $[t_{j-1}, t_j]$, either
    \begin{equation}\label{eq:proof_step_15}
    \widetilde{V}(\bar{\B{x}}(t_j)) - \widetilde{V}(\bar{\B{x}}(t_{j-1})) = 0,
    \end{equation}
    because of \eqref{eq:V_prime}, or
    \begin{equation}\label{eq:proof_step_12}
    \widetilde{V}(\bar{\B{x}}(t_j)) - \widetilde{V}(\bar{\B{x}}(t_{j-1})) \le \bar{p}(t_j; \bar{\B{x}}(t_{j-1})),
    \end{equation}
    because of \eqref{eq:network_semipassivity}.
    Now, note that 
    $\dot{\bar{p}} (\bar{\B{x}}) = - \bar{q}(\bar{\B{x}})$; $\bar{q}$ being defined in Lemma \ref{lem:fix_pogromsky_discontinuous}.
    By exploiting the Lemma, with $\bar{\alpha}$ defined therein, we have%
    \begin{equation}\label{eq:proof_step_11}
    \dot{\bar{p}}(\bar{\B{x}}) = - \bar{q}(\bar{\B{x}}) \le - \bar{\alpha}(\left\lVert \bar{\B{x}} \right\rVert).
    \end{equation}
    From \eqref{eq:proof_step_14} and \eqref{eq:proof_step_11}, it follows that
    \begin{equation}\label{eq:proof_step_13}
    \bar{p}(t_j; \bar{\B{x}}(t_{j-1})) \le 
    - \int_{t_{j-1}}^{t_j} \bar{\alpha}(\left\lVert \bar{\B{x}} (\tau)\right\rVert) \R{d}\tau =
    - \int_{t_{j-1}}^{t_j} \tilde{\alpha}(\left\lVert \bar{\B{x}} (\tau)\right\rVert) \R{d}\tau.
    \end{equation}
    Combining \eqref{eq:proof_step_12} and \eqref{eq:proof_step_13}, and from Lemma \ref{lem:fix_pogromsky_discontinuous}, we have
    \begin{multline}\label{eq:bound_single_interval}
    \widetilde{V}(\bar{\B{x}}(t_j)) - \widetilde{V}(\bar{\B{x}}(t_{j-1})) \le - \bar{p}(t_j, \bar{\B{x}}(t_{j-1})) \le \\
    - \int_{t_{j-1}}^{t_j} \tilde{\alpha}(\left\lVert \bar{\B{x}} (\tau)\right\rVert) \ \R{d}\tau \le 0.
    \end{multline}
    Therefore, since
    \begin{multline}\label{eq:V_time_intervals}
    \widetilde{V}(\bar{\B{x}}(t)) - \widetilde{V}(\bar{\B{x}}(0)) =
    [ \widetilde{V}(\bar{\B{x}}(t)) - \widetilde{V}(\bar{\B{x}}(t_{M-1})) ] + [ \widetilde{V}(\bar{\B{x}}(t_{M-1})) \\ - \widetilde{V}(\bar{\B{x}}(t_{M-2})) ] + \ldots + [ \widetilde{V}(\bar{\B{x}}(t_{2})) - \widetilde{V}(\bar{\B{x}}(0)) ],
    \end{multline}   
    exploiting \eqref{eq:proof_step_14}, \eqref{eq:proof_step_15} and \eqref{eq:bound_single_interval}, we get
    \begin{equation}\label{eq:V_prime_q}
    \widetilde{V}(\bar{\B{x}}(t)) - \widetilde{V}(\bar{\B{x}}(0)) \le - \int_{0}^{t} \tilde{\alpha}(\left\lVert \bar{\B{x}} (\tau)\right\rVert) \ \R{d}\tau \le 0.
    \end{equation}
    Hence, $\widetilde{V}(\bar{\B{x}}(t)) \le \widetilde{V}(\bar{\B{x}}(0))$, i.e. $\widetilde{V}(\bar{\B{x}}(t))$ is bounded for all $t \ge 0$.    
    Also, for large values of $\bar{\B{x}}$ ($\left\lVert \bar{\B{x}} \right\rVert > \tilde{\rho}$), from \eqref{eq:V_prime} we have $\widetilde{V}(\bar{\B{x}}) = \bar{V}(\bar{\B{x}})$; therefore $\widetilde{V}(\bar{\B{x}})$ is radially unbounded as $\bar{V}(\bar{\B{x}})$ is.
    Thus, $\widetilde{V}(\bar{\B{x}}(t))$ being bounded implies that $\bar{\B{x}}$ must be bounded (even if $\widetilde{V}$ is a discontinuous function).    
    This means that network  \eqref{eq:network}-\eqref{eq:diffusive_discontinuous_coupling} is \emph{Lagrange stable}, i.e. $\left\lVert \B{x}(t) \right\rVert < +\infty$ for all $t$.
    
    Next, we show that \eqref{eq:network}-\eqref{eq:diffusive_discontinuous_coupling} is uniformly asymptotically bounded.
    We define
    \begin{equation}
    \tilde{\alpha}'(\left\lVert \bar{\B{x}} \right\rVert) \triangleq 
    \begin{dcases}
    0, & \text {if } \left\lVert \bar{\B{x}} \right\rVert \leq \tilde{\rho}, \\
    \bar{\alpha}(\left\lVert \bar{\B{x}} \right\rVert) - \bar{\alpha}(\tilde{\rho}), & \text{otherwise},
    \end{dcases}
    \end{equation}
    which is continuous and null if and only if $\left\lVert \bar{\B{x}} \right\rVert \leq \tilde{\rho}$, as $\bar{\alpha}$ is increasing.
    In addition, since the network solutions are bounded, $\bar{\B{x}}(t)$ belongs to a compact set, and therefore $\tilde{\alpha}'(\left\lVert \bar{\B{x}}(t) \right\rVert)$ is uniformly continuous in that set.
    From \eqref{eq:V_prime_q}, we know that $\int_{0}^{t} \tilde{\alpha}(\left\lVert \bar{\B{x}}(\tau) \right\rVert) \R{d}\tau$ is finite for all $t \in [0, +\infty]$ as it is bounded by two finite terms.
    Consequently, 
    $\int_{0}^{t} \tilde{\alpha}'(\left\lVert \bar{\B{x}}(\tau) \right\rVert) \R{d}\tau$ 
    is also bounded, and we can employ Lemma \ref{lem:modified_barbalat} to conclude that $\lim_{t \rightarrow + \infty} \tilde{\alpha}'(\left\lVert \bar{\B{x}}(t) \right\rVert) = 0$.
    Since $\tilde{\alpha}'(\left\lVert \bar{\B{x}} \right\rVert)$ is null only when $\left\lVert \bar{\B{x}} \right\rVert \leq \tilde{\rho}$, this means that 
    \begin{equation}
    \mathop{\lim \sup}_{t \rightarrow + \infty} \left\lVert \bar{\B{x}}(t) \right\rVert \le \tilde{\rho}.
    \end{equation}
\end{proof}

\section{Asymptotic convergence of heterogeneous networks}
\label{sec:asymptotic_convergence}
Before giving our main result,
we define the average vector field $\tilde{\B{f}} : \BB{R}^{nN} \rightarrow \BB{R}^n$ as follows:
\begin{equation}\label{eq:average_dynamics}
\tilde{\B{f}}(\bar{\B{x}}) \triangleq \frac{1}{N} \sum_{i = 1}^{N} \B{f}_i({\B{x}_i}) = \dot{\tilde{\B{x}}},
\end{equation}
where the coupling terms in $\dot{\tilde{\B{x}}}$ cancel out since $\B{L},\B{L}_\R{d}$ are symmetric.
Recalling that $\B{e}_i \triangleq \B{x}_i - \tilde{\B{x}}$, we can write 
\begin{equation}\label{eq:error_dynamics}
\begin{aligned}
\dot{\B{e}}_i = \dot{\B{x}}_i - \dot{\tilde{\B{x}}}&=
\B{f}_i(\B{x}_i)
- c \sum_{j=1}^{N} L_{ij} \B{\Gamma} (\B{x}_j - \B{x}_i)\\
 & - c_\R{d} \sum_{j=1}^{N} L_{ij}^\R{d} \B{\Gamma}_\R{d} \R{sign} (\B{x}_j - \B{x}_i) - \tilde{\B{f}}(\bar{\B{x}}).
\end{aligned}
\end{equation}

\begin{theorem}\label{thm:stability_heterogeneous_networks}
    Consider network \eqref{eq:network} controlled by the distributed control action \eqref{eq:diffusive_discontinuous_coupling}.
    If
    \begin{enumerate}[(a)]
        \item the controlled network is uniformly ultimately bounded to the ball $\C{B}_r^\R{c}$, for some $r > 0$;
        \item each agent dynamics $\B{f}_i$ is QUAD($\B{P}$, $\B{Q}_i$) in $\C{B}_r^\R{c}$, and $\R{sym}(\B{P} \B{\Gamma}) > 0$, $\mu_\infty^-(\B{P} \B{\Gamma}_{\R{d}}) > 0$;
        \item $\C{G}$ and $\C{G}_\R{d}$ are connected graphs;
    \end{enumerate}
    then 
    \begin{enumerate}[(i)]
        \item there exist  $c^*$ and $c_\R{d}^*$ such that, if $c > c^*$ and $c_\R{d} \ge c_\R{d}^*$, then global asymptotic synchronization is achieved.
        Moreover, the asymptotic synchronous trajectory $\B{s}(t)$ is a solution to $\dot{\B{s}}(t) = \frac{1}{N} \sum_{i=1}^N \B{f}_i(\B{s}(t))$;
        \item $c^*$ and $c_\R{d}^*$ are given by 
        \begin{equation}\label{eq:c_star}
        c^* \triangleq \frac{\max_i{\left(\lVert \B{Q}_i \rVert_2\right)}} {\lambda_{2} (\B{L}) \lambda_{\R{min}} (\R{sym} \ \B{P}\B{\Gamma} )},
        \quad 
        c^*_\R{d} \triangleq \frac{\left\lVert \left( \left\lvert \B{P} \right\rvert \right) \B{m} \right\rVert_\infty}{\delta_{\C{G}_\R{d}} \mu_\infty^-(\B{P}\B{\Gamma}_\R{d})},
        \end{equation}
        where $\delta_{\C{G}_\R{d}}$ is the \emph{minimum density} \cite{coraggio2019achieving} of the graph $\C{G}_\R{d}$, and $\B{m} \in \BB{R}_{\ge 0}^n$ is a vector such that
        \begin{equation}
        \B{m} \ge \left\lvert \B{f}_i(\tilde{\B{x}}) - \tilde{\B{f}}(\bar{\B{x}}) \right\rvert,
        \quad \forall i \in \{1, \dots, N\}, \ \forall \bar{\B{x}} \in \C{B}_r^\R{c}.
        \end{equation}
    \end{enumerate}
\end{theorem}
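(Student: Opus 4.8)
The plan is to analyse the error dynamics \eqref{eq:error_dynamics} with a nonsmooth Lyapunov argument, based on the weighted quadratic candidate
\[
V(\bar{\B{e}}) \triangleq \tfrac{1}{2} \sum_{i=1}^N \B{e}_i\T \B{P} \B{e}_i ,
\]
with $\B{P}$ the matrix from the QUAD property in assumption (b). The key structural fact is that $\sum_{i=1}^N \B{e}_i = \B{0}$, so the stacked error $\bar{\B{e}} = [\B{e}_1\T \ \cdots \ \B{e}_N\T]\T$ is orthogonal to the consensus subspace $\B{1}_N \otimes \BB{R}^n$; this is precisely what lets $\lambda_2(\B{L})$, rather than the null eigenvalue of $\B{L}$, enter the estimates. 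Since the $\R{sign}$ coupling renders the closed loop discontinuous, I would read all trajectories in the Filippov sense and replace $\dot V$ by its set-valued Lie derivative, then bound its maximal element. Assumption (a) guarantees that, after a finite time, every $\B{x}_i(t)$ and hence $\tilde{\B{x}}(t)$ lies in the convex set $\C{B}_r^\R{c}$, so both the QUAD inequality and the componentwise bound $\B{m}$ become applicable along the relevant trajectories.

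Next I would differentiate $V$ and split the result into three groups. For the vector-field contribution I add and subtract $\B{f}_i(\tilde{\B{x}})$, writing $\B{f}_i(\B{x}_i) - \tilde{\B{f}}(\bar{\B{x}}) = [\B{f}_i(\B{x}_i) - \B{f}_i(\tilde{\B{x}})] + [\B{f}_i(\tilde{\B{x}}) - \tilde{\B{f}}(\bar{\B{x}})]$. The first bracket is handled by QUAD, giving $\B{e}_i\T \B{P}[\B{f}_i(\B{x}_i) - \B{f}_i(\tilde{\B{x}})] \le \B{e}_i\T \B{Q}_i \B{e}_i \le \max_i \lVert \B{Q}_i \rVert_2 \, \lVert \B{e}_i \rVert^2$; the second, heterogeneity, bracket is bounded componentwise by $\B{m}$, so that $\sum_i \B{e}_i\T \B{P}[\B{f}_i(\tilde{\B{x}}) - \tilde{\B{f}}(\bar{\B{x}})] \le \lVert (\lvert \B{P} \rvert) \B{m} \rVert_\infty \sum_i \lVert \B{e}_i \rVert_1$. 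The diffusive term collapses, using $\sum_j L_{ij} = 0$ and $\B{x}_j - \B{x}_i = \B{e}_j - \B{e}_i$, to $-c\, \bar{\B{e}}\T (\B{L} \otimes \R{sym}(\B{P}\B{\Gamma}))\, \bar{\B{e}} \le -c\, \lambda_2(\B{L})\, \lambda_{\R{min}}(\R{sym}(\B{P}\B{\Gamma}))\, \lVert \bar{\B{e}} \rVert^2$, where orthogonality of $\bar{\B{e}}$ to the consensus direction and assumption (c) are used.

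The decisive term is the discontinuous one. Using the Laplacian identity and the symmetry $a^{\R{d}}_{ij} = a^{\R{d}}_{ji}$ of the undirected graph $\C{G}_\R{d}$, I would recast it in edge form as $-\tfrac{c_\R{d}}{2} \sum_{i,j} a^{\R{d}}_{ij} (\B{e}_i - \B{e}_j)\T \B{P}\B{\Gamma}_\R{d}\, \R{sign}(\B{e}_i - \B{e}_j)$. The scalar inequality $\B{v}\T \B{A}\, \R{sign}(\B{v}) \ge \mu_\infty^-(\B{A})\, \lVert \B{v} \rVert_1$, which follows directly from Definition \ref{def:mu_infinity_minus}, then bounds each edge summand from below, and invoking the \emph{minimum density} $\delta_{\C{G}_\R{d}}$ of \cite{coraggio2019achieving} to pass from the edge sum $\sum_{i,j} a^{\R{d}}_{ij} \lVert \B{e}_i - \B{e}_j \rVert_1$ to the nodal sum $\sum_i \lVert \B{e}_i \rVert_1$ gives, for zero-sum errors, a discontinuous contribution bounded above by $-c_\R{d}\, \delta_{\C{G}_\R{d}}\, \mu_\infty^-(\B{P}\B{\Gamma}_\R{d}) \sum_i \lVert \B{e}_i \rVert_1$. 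Collecting everything, the $\lVert \B{e}_i \rVert^2$ terms are dominated by choosing $c > c^*$ and the $\lVert \B{e}_i \rVert_1$ terms by choosing $c_\R{d} \ge c^*_\R{d}$, with $c^*, c^*_\R{d}$ exactly as in \eqref{eq:c_star}; this yields that the maximal element of the set-valued Lie derivative of $V$ is at most $-\kappa \lVert \bar{\B{e}} \rVert^2$ for some $\kappa > 0$. A nonsmooth Barbalat/invariance argument in the spirit of Lemma \ref{lem:modified_barbalat} then forces $\bar{\B{e}}(t) \to \B{0}$, i.e. global asymptotic synchronization, proving (i)--(ii).

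Finally, the characterization of the synchronous trajectory is immediate: since $\B{L}$ and $\B{L}_\R{d}$ are symmetric, both coupling terms cancel in the average, so $\dot{\tilde{\B{x}}} = \tilde{\B{f}}(\bar{\B{x}})$ holds identically, and evaluating on the reached manifold $\bar{\B{x}} = \B{1}_N \otimes \B{s}$ gives $\dot{\B{s}} = \frac{1}{N}\sum_i \B{f}_i(\B{s})$. I expect the main obstacle to be the rigorous treatment of the discontinuous term within Filippov calculus: one must verify that the edge-form rewriting and the $\mu_\infty^-$ lower bound remain valid for the \emph{set-valued} $\R{sign}$ at points where some components of $\B{e}_i - \B{e}_j$ vanish, and that the minimum-density inequality of \cite{coraggio2019achieving} transfers the edge-based dissipation to a nodal bound uniformly over all Filippov selections; making precise the equivalent (sliding) dynamics on the synchronization manifold is the companion subtlety.
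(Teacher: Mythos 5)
Your proposal is correct and follows essentially the same route as the paper: the same weighted Lyapunov function $V=\tfrac12\sum_i \B{e}_i\T\B{P}\B{e}_i$, the same add-and-subtract of $\B{f}_i(\tilde{\B{x}})$ splitting the drift into a QUAD part and a heterogeneity part bounded by $\B{m}$, the same edge-form rewriting of the discontinuous term via symmetry of $\C{G}_\R{d}$, and the same two thresholds obtained by dominating the quadratic terms with $c\,\lambda_2(\B{L})\lambda_{\min}(\R{sym}\,\B{P}\B{\Gamma})$ and the $1$-norm terms with $c_\R{d}\,\delta_{\C{G}_\R{d}}\mu_\infty^-(\B{P}\B{\Gamma}_\R{d})$. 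The only difference is that you spell out the final $\lambda_2$/$\mu_\infty^-$/minimum-density estimates and the Filippov caveats explicitly, whereas the paper delegates those steps to the proof of Theorem 5 in \cite{coraggio2019achieving}.
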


\begin{proof}
    Consider the candidate common Lyapunov function 
    $
    V \triangleq \frac{1}{2} \sum_{i=1}^N \B{e}_i\T \B{P} \B{e}_i.
    $
    From \eqref{eq:error_dynamics}, we have
    \begin{equation}
    \begin{aligned}
    \dot{V} &= \sum_{i=1}^N \B{e}_i\T \B{P} \left( \B{f}_i(\B{x}_i) - \tilde{\B{f}}(\bar{\B{x}}) \right)
    - c \sum_{i=1}^N \sum_{j=1}^N L_{ij} \B{e}_i\T \B{P} \B{\Gamma} \B{e}_j \\
    &\phantom{=\,}- c_\R{d} \sum_{i=1}^N \sum_{j = 1}^{N} L_{ij}^\R{d} \B{e}_i\T \B{P} \B{\Gamma}_\R{d} \R{sign}(\B{e}_j - \B{e}_i),
    \end{aligned}
    \end{equation}
    where we used the fact that $\R{sign}(\B{x}_j - \B{x}_i) = \R{sign}(\B{e}_j - \B{e}_i)$.
    Then, adding and subtracting $\sum_{i=1}^{N} \B{e}_i\T \B{P} \B{f}_i(\tilde{\B{x}})$, we have
    \begin{multline*}
    \dot{V} = \sum_{i=1}^N \B{e}_i\T \B{P} \left( \B{f}_i(\B{x}_i) - \B{f}_i(\tilde{\B{x}}) \right) 
    +\sum_{i=1}^N \B{e}_i\T \B{P} \left( \B{f}_i(\tilde{\B{x}}) - \tilde{\B{f}}(\bar{\B{x}}) \right) \\
    -c \sum_{i=1}^N \sum_{j=1}^N L_{ij} \B{e}_i\T \B{P} \B{\Gamma} \B{e}_j
    - c_\R{d} \sum_{i=1}^N \sum_{j = 1}^{N} L_{ij}^\R{d} \B{e}_i\T \B{P} \B{\Gamma}_\R{d} \R{sign}(\B{e}_j - \B{e}_i).
    \end{multline*}
    In addition, since the communication graphs are undirected ($L_{ij}^\R{d} = L_{ji}^\R{d}$), for each term $\B{e}_i\T \B{P} \B{\Gamma}_\R{d} \R{sign}(\B{e}_j - \B{e}_i)$, there must exist the symmetric term $\B{e}_j\T \B{P} \B{\Gamma}_\R{d} \R{sign}(\B{e}_i - \B{e}_j)$.
    Hence, we may recast $\dot{V}$ as
    \begin{multline*}
    \dot{V} = \sum_{i=1}^N \B{e}_i\T \B{P} \left( \B{f}_i(\B{x}_i) - \B{f}_i(\tilde{\B{x}})\right)
    +\sum_{i=1}^N \B{e}_i\T \B{P} \left( \B{f}_i(\tilde{\B{x}}) - \tilde{\B{f}}(\bar{\B{x}}) \right) \\
    -c \sum_{i=1}^N \sum_{j=1}^N L_{ij} \B{e}_i\T \B{P} \B{\Gamma} \B{e}_j
    - c_\R{d} \sum_{(i,j) \in \C{E}_\R{d}}  (\B{e}_i - \B{e}_j)\T \B{P} \B{\Gamma}_\R{d} \R{sign}(\B{e}_i - \B{e}_j).
    \end{multline*}
    
    As the network is uniformly ultimately bounded, there exists a finite $T^* > 0$ such that, for $t \ge T^*$, $\left\lVert \B{x}(t) \right\rVert \in \C{B}_r^\R{c}$.
    From now on, we take $t \ge T^*$, and,
    since $\B{f}_i$ is QUAD($\B{P}$, $\B{Q}_i$), we get
    \begin{multline}
    \dot{V} \le \sum_{i=1}^N \left( \B{e}_i\T \B{Q}_i \B{e}_i \right)
    +\sum_{i=1}^N \B{e}_i\T \B{P} \left( \B{f}_i(\tilde{\B{x}}) - \tilde{\B{f}}(\bar{\B{x}}) \right)
    - c \sum_{i=1}^N \sum_{j=1}^N L_{ij} \\
    \B{e}_i\T \B{P} \B{\Gamma} \B{e}_j
    - c_\R{d} \sum_{(i,j) \in \C{E}_\R{d}}  (\B{e}_i - \B{e}_j)\T \B{P} \B{\Gamma}_\R{d} \R{sign}(\B{e}_i - \B{e}_j).
    \end{multline}
    By defining the diagonal block matrix $\bar{\B{Q}}$ having $\B{Q}_1, \dots, \B{Q}_N$ on its diagonal, we can write $\sum_{i=1}^N \left( \B{e}_i\T \B{Q}_i \B{e}_i \right) = \bar{\B{e}}\T \bar{\B{Q}} \bar{\B{e}}$.
    
    As all $\B{f}_i$'s are QUAD in $\C{B}_r^\R{c}$, they are also bounded therein.
    Then, there exists a vector $\B{m} \in \BB{R}^n_{\ge 0}$, such that
    \begin{equation}
    \B{m} \ge \left\lvert \B{f}_i(\tilde{\B{x}}) - \tilde{\B{f}}(\bar{\B{x}}) \right\rvert,
    \qquad \forall i \in \{1, \dots, N\}, \ \forall \bar{\B{x}} \in \C{B}_r^\R{c}.
    \end{equation}
    Therefore, letting $M \triangleq \left\lVert \left( \left\lvert \B{P} \right\rvert \right) \B{m} \right\rVert_\infty$, it holds that
    \begin{equation}
    \begin{aligned}
    \sum_{i=1}^N \B{e}_i\T \B{P} \left( \B{f}_i(\tilde{\B{x}}) - \tilde{\B{f}}(\bar{\B{x}}) \right) &\le
    \sum_{i=1}^N \left\lVert \B{e}_i \right\rVert_1 \left\lVert \B{P} \left( \B{f}_i(\tilde{\B{x}}) - \tilde{\B{f}}(\bar{\B{x}}) \right) \right\rVert_\infty \\ &\le
    M \sum_{i=1}^N \left\lVert \B{e}_i \right\rVert_1  =
    M \left\lVert \bar{\B{e}} \right\rVert_1.
    \end{aligned}
    \end{equation}
    Defining $\bar{\B{a}} \triangleq \left(\B{B}_\R{d}\T \otimes \B{I}_n \right) \bar{\B{e}}$, we obtain $\dot{V} \le W_1 + W_2$, where 
    \begin{align}
        W_1 &\triangleq \bar{\B{e}}\T \left( \bar{\B{Q}} - c \B{L} \otimes \B{P} \B{\Gamma} \right) \bar{\B{e}},\\
        W_2 &\triangleq M \left\lVert \bar{\B{e}} \right\rVert_1 
        - c_\R{d} \bar{\B{a}}\T \left( \B{I}_{N_{\C{E}_\R{d}}} \otimes \B{P} \B{\Gamma}_\R{d} \right) \R{sign} (\bar{\B{a}}).
    \end{align}
    Then, following the steps in \cite[proof of Theorem 5]{coraggio2019achieving}, we find that $W_1 < 0$ if $c > c^*$, and $W_2 \le 0$ if $c_\R{d} \ge c_\R{d}^*$, with $c^*, c_\R{d}^*$ given by \eqref{eq:c_star}.
    Finally, since $W_1 < 0$ and $W_2 \le 0$, then $\dot{V} < 0$, which means that all $\B{e}_i$'s tend to zero, i.e. all $\B{x}_i$'s tend to $\tilde{\B{x}}$, whose dynamics is given in \eqref{eq:average_dynamics}. 
\end{proof}
\begin{remark}
Note that the assumptions on boundedness and QUADness in Theorem \ref{thm:stability_heterogeneous_networks} are quite mild and they can be easily verified.
Indeed, uniform ultimate boundedness of network \eqref{eq:network} can be checked by using Proposition \ref{prp:pogromsky_for_disc}, while the QUADness hypothesis on the dynamics can be verified by testing boundedness of the Jacobian of the individual vector fields; see Proposition \ref{pro:bounded_jacobian_quad}.
\end{remark}
\begin{remark}
Theorem \ref{thm:stability_heterogeneous_networks} can be easily adapted to account for possible discontinuities in the nodes' dynamics.
In that case, the agents must be \textsigma-QUAD($\B{P}$, $\B{Q}_i$, $\B{M}_i$) \cite{coraggio2019achieving} (rather than QUAD) and the critical threshold for the discontinuous coupling layer can be proved to be
\begin{equation}
c^*_\R{d} \triangleq \frac{\left\lVert \left( \left\lvert \B{P} \right\rvert \right) \B{m} \right\rVert_\infty + \left\lVert \bar{\B{M}} \right\rVert_\infty} {\delta_{\C{G}_\R{d}} \mu_\infty^-(\B{P}\B{\Gamma}_\R{d})},
\quad 
\bar{\B{M}} \triangleq \begin{bmatrix}
\B{M}_1 &        & \\
& \ddots & \\
&		 & \B{M}_N
\end{bmatrix}.
\end{equation}
\end{remark}

\section{Numerical validation}

We consider a set of 3 modified van der Pol oscillators of the form
\begin{equation}\label{eq:modified_van_der_pol}
\dot{\B{x}} = \B{f}_i(\B{x})  + \B{u} = 
\begin{bmatrix}
x_1 - \epsilon x_1 \\
\mu_i (1 - x_1^2 - \eta x_2^2) x_2 - x_1 \\
\end{bmatrix}
+ \begin{bmatrix} u_1 \\ u_2
\end{bmatrix},
\end{equation}
for $i = 1,2,3$, with $\epsilon = 0.01$, $\eta = 0.001$, and $\mu_1 = 1$, $\mu_2 = 2$, $\mu_3 = 3$.
We couple the agents through the diffusive and discontinuous coupling law \eqref{eq:diffusive_discontinuous_coupling}, with $\B{L}$, $\B{L}_\R{d}$ corresponding to complete graphs, and $\B{\Gamma} = \B{\Gamma}_\R{d} = \B{I}_2$.
Introducing the storage function $V_i(\B{x}) = \frac{1}{2}(x_1^2 + x_2^2)$, we can show systems \eqref{eq:modified_van_der_pol} are strongly strictly semipassive. 
Indeed,
\begin{equation*}
\begin{split}
\dot{V}_i &= x_1 \dot{x}_1 + x_2 \dot{x}_2 \\
&= x_1 x_2 - \epsilon x_1^2 + x_1 u_1 + \mu_i x_2^2 (1 - x_1^2 - \eta x_2^2 ) - x_1 x_2 + x_2 u_2 \\
&= - \epsilon x_1^2 + \mu_i x_2^2 (1 - x_1^2 - \eta x_2^2 ) + \B{x}\T \B{u} = - h_i(\B{x}) + \B{y}\T \B{u},
\end{split}
\end{equation*}
where $h_i(\B{x}) \triangleq \epsilon x_1^2 + \mu_i x_2^2 ( x_1^2 + \eta x_2^2 - 1 )$.
From Proposition \ref{prp:pogromsky_for_disc}, it follows that the network is uniformly ultimately bounded to $\C{B}_r^\R{c}$ for some $r$; a numerical exploration shows that $r = 7.72$ is a suitable value.
Since $\B{f}$ is continuous, its Jacobian is bounded in $\C{B}_r^\R{c}$, and the three agents are QUAD($\B{I}$, $\B{Q}_i$), $i = 1, 2, 3$ (see Proposition \ref{pro:bounded_jacobian_quad} in the Appendix), 
All the assumptions of Theorem \ref{thm:stability_heterogeneous_networks} are fulfilled, and its thesis can be used to compute the critical values $c^*$ and $c_\R{d}^*$ that guarantee asymptotic synchronization.
Specifically, knowing $r$, we can compute analytically that  $\max_i{\left(\lVert \B{Q}_i \rVert_2\right)} \approx 11.58$, and numerically that $\left\lVert \B{m} \right\rVert_\infty \approx 179.90$; moreover, $\lambda_2(\B{L}) = N = 3$, and $\delta_{\C{G}_\R{d}} = N/2 = 3/2$ \cite{coraggio2019achieving}. 
Therefore, through \eqref{eq:c_star}, we compute that $c^* = 3.86$ and $c_\R{d}^* = 119.93$. 

In Fig.~\ref{fig:oscillators}, two simulations are reported.
Namely, in Fig.~\ref{fig:oscillators_1}, where $c = 4 > c^*$ and the discontinuous coupling is absent, the network does not achieve synchronization.
When the discontinuous action is turned on with strength $c_\R{d} = 120 > c_{\R{d}}^*$ in Fig.~\ref{fig:oscillators_2}, convergence is attained.
Note that even if $c$ were larger, the diffusive coupling alone would not be able to bring the synchronization error to zero (simulations omitted here for the sake of brevity).
Also, the analytical thresholds $c^*$, $c_\R{d}^*$ are conservative.

\section{Conclusions}
This paper solves the problem of achieving asymptotic convergence in networks of heterogeneous nonlinear systems. 
In particular, a distributed approach is proposed that combines traditional diffusive coupling with a discontinuous coupling layer that, under suitable assumptions on the individual dynamics, is capable of guaranteeing asymptotic convergence of all the nodes towards a common trajectory. 
To support the control design, we provided analytical estimates of the minimum coupling gains required to achieve complete synchronization, as a function of the node dynamics, and of the topology of the diffusive and discontinuous layers. The effectiveness of the approach was demonstrated via a representative example.
\begin{figure}[t]
    \centering
    \subfloat[]{\includegraphics[max width=0.48\columnwidth]{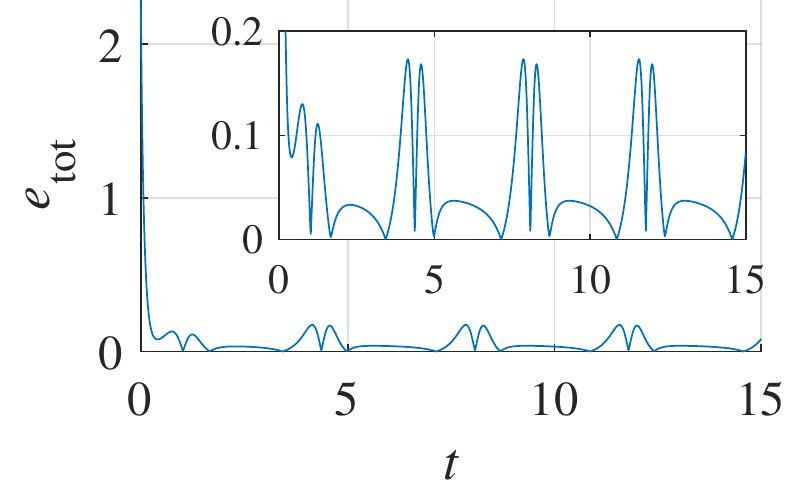}
        \label{fig:oscillators_1}}%
    \subfloat[]{\includegraphics[max width=0.48\columnwidth]{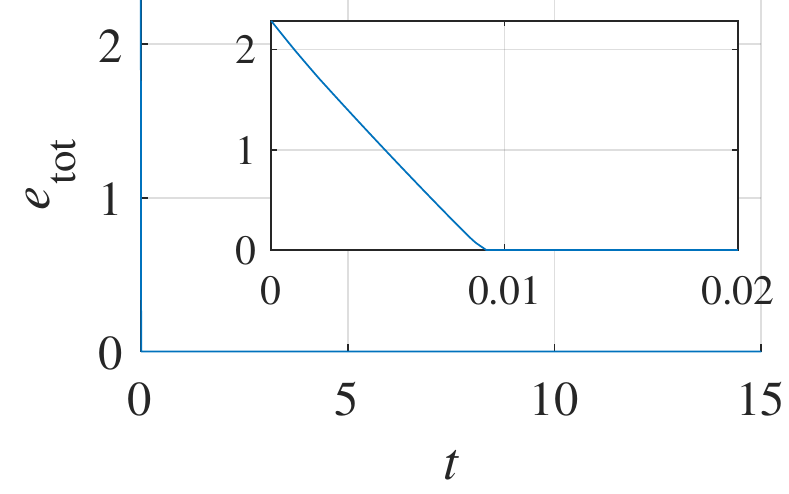}
        \label{fig:oscillators_2}}%
    \caption{Total synchronization error $e_\R{tot} \triangleq \frac{1}{N} \sum_{i = 1}^N \left\lVert \B{e}_i \right\rVert_2$ in a network of three different modified van der Pol oscillators \eqref{eq:modified_van_der_pol}. 
        In (a), $c = 4$, $c_\R{d} = 0$; 
        in (b), $c = 4$, $c_\R{d} = 120$. 
        The initial conditions are $\bar{\B{x}}(t=0)= [ 1.5 \ \ 1.5 \ \ 1.75 \ \ 1.75 \ \ 2 \ \ 2 ]\T$.}
    \label{fig:oscillators}
\end{figure}



\bibliographystyle{IEEEtran}  
\bibliography{references}

\appendix

\begin{lemma}\label{lem:fix_pogromsky_discontinuous}
Consider network \eqref{eq:network}-\eqref{eq:diffusive_discontinuous_coupling}.
If 
\begin{enumerate}[(a)]
    \item all systems in \eqref{eq:network} are strongly strictly semipassive, with stability components $h_i$, $i = 1, \dots, N$;
    \item $c \ge 0$, $c_\R{d} \ge 0$, $\R{sym}( \B{\Gamma}) \ge 0$, and $\mu_\infty^-(\B{\Gamma}_{\R{d}}) \ge 0$;
\end{enumerate} 
then there exists a finite $\bar{\rho} \ge 0$ such that 
\begin{equation}\label{eq:fixed_assumption}
\bar{q}(\bar{\B{x}}) \triangleq \sum_{i = 1}^N \left( h_i(\B{x}_i) - \B{y}_i\T \B{u}_i \right) \ge \bar{\alpha}(\left\lVert \bar{\B{x}} \right\rVert),
\quad \text{if }\left\lVert \bar{\B{x}} \right\rVert \ge \bar{\rho},
\end{equation}
where $\bar{\alpha} : {} [\bar{\rho}, +\infty[ {} \rightarrow \BB{R}_{\ge 0}$ is a continuous and increasing scalar function.
\end{lemma}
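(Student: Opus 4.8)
The plan is to follow the semipassivity-based boundedness argument of \cite{pogromsky1999diffusion}, the only genuinely new ingredient being the discontinuous coupling layer. The goal is to show that both coupling terms contribute non-negatively to $\bar{q}$, so that $\bar{q}(\bar{\B{x}}) \ge \sum_{i=1}^N h_i(\B{x}_i)$, and then to turn the collection of stability components $h_i$ into a single continuous, increasing, radially unbounded lower bound $\bar\alpha$. First I would substitute $\B{y}_i = \B{x}_i$ and the coupling law \eqref{eq:diffusive_discontinuous_coupling} to write $\bar{q}(\bar{\B{x}}) = \sum_{i=1}^N h_i(\B{x}_i) - \sum_{i=1}^N \B{x}_i\T\B{u}_i$, and handle the diffusive and discontinuous contributions of $-\sum_i \B{x}_i\T\B{u}_i$ separately. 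For the diffusive part, I would exploit $\sum_j L_{ij} = 0$ and the symmetry of $\B{L}$ to collect the double sum over symmetric edge pairs, obtaining $c\sum_{(i,j)\in\C{E}} (\B{x}_i - \B{x}_j)\T \R{sym}(\B{\Gamma}) (\B{x}_i - \B{x}_j)$, which is non-negative because $\R{sym}(\B{\Gamma}) \ge 0$.

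The hard part is the discontinuous term, both because of the set-valued nature of $\R{sign}$ at the origin and because its sign is not a priori obvious. I would collect it over symmetric edge pairs of $\C{G}_\R{d}$ as $c_\R{d}\sum_{(i,j)\in\C{E}_\R{d}} (\B{x}_i - \B{x}_j)\T \B{\Gamma}_\R{d}\, \R{sign}(\B{x}_i - \B{x}_j)$; the diagonal terms drop out since $\R{sign}(\B{0}) = \B{0}$, and the oddness of $\R{sign}$ converts the mixed terms into this form. The key estimate is $\B{v}\T\B{\Gamma}_\R{d}\,\R{sign}(\B{v}) \ge \mu_\infty^-(\B{\Gamma}_\R{d})\,\lVert\B{v}\rVert_1$, obtained by splitting $\B{\Gamma}_\R{d}$ into its diagonal and off-diagonal parts, using $v_k\,\R{sign}(v_k) = \lvert v_k\rvert$, and bounding $\lvert\R{sign}(v_l)\rvert \le 1$ on the off-diagonal entries; the definition \eqref{eq:mu_infty-} of $\mu_\infty^-$ then delivers the bound. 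Since $\mu_\infty^-(\B{\Gamma}_\R{d}) \ge 0$, this term is also non-negative, and where $\B{x}_i = \B{x}_j$ the set-valued selection is multiplied by the zero vector and is thus harmless. Together with the diffusive estimate this yields $\bar{q}(\bar{\B{x}}) \ge \sum_{i=1}^N h_i(\B{x}_i)$.

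Finally, I would build $\bar\alpha$ from the individual stability components. By continuity each $h_i$ attains a finite minimum $m_i$ on the compact ball $\lVert\B{x}_i\rVert \le \rho_i$, while $h_i(\B{x}_i) \ge \alpha_i(\lVert\B{x}_i\rVert) \ge 0$ for $\lVert\B{x}_i\rVert \ge \rho_i$, with each $\alpha_i$ increasing and radially unbounded by strong strict semipassivity. When $\lVert\bar{\B{x}}\rVert \ge \sqrt{N}\max_i\rho_i$, at least one index $i^*$ satisfies $\lVert\B{x}_{i^*}\rVert \ge \lVert\bar{\B{x}}\rVert/\sqrt{N} \ge \rho_{i^*}$; discarding the non-negative contributions of the remaining large nodes and bounding the small nodes by the constant $C_0 \triangleq -\sum_{i : m_i < 0} m_i \ge 0$ gives $\sum_i h_i(\B{x}_i) \ge \min_k \alpha_k(\lVert\bar{\B{x}}\rVert/\sqrt{N}) - C_0$. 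Setting $\bar\alpha(s) \triangleq \min_k \alpha_k(s/\sqrt{N}) - C_0$, which is continuous and increasing as a minimum of finitely many increasing radially unbounded functions, and choosing $\bar\rho \ge \sqrt{N}\max_i\rho_i$ large enough that $\bar\alpha \ge 0$ on $[\bar\rho, +\infty[$, completes the proof.
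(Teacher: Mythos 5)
Your proof is correct and follows the same skeleton as the paper's: first show that both coupling contributions to $-\sum_i \B{y}_i\T\B{u}_i$ are non-negative so that $\bar{q}(\bar{\B{x}}) \ge \sum_i h_i(\B{x}_i)$, then manufacture a continuous increasing minorant $\bar{\alpha}$ from the stability components. The differences are in the details and are worth noting. For the discontinuous term, the paper simply invokes \cite[Lemma 9]{coraggio2019achieving}, whereas you prove the key estimate $\B{v}\T\B{\Gamma}_\R{d}\,\R{sign}(\B{v}) \ge \mu_\infty^-(\B{\Gamma}_\R{d})\lVert\B{v}\rVert_1$ inline by splitting diagonal and off-diagonal entries; this is exactly the content of the cited lemma, so your version is more self-contained at no cost. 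For the construction of $\bar{\alpha}$, the paper introduces per-node enlarged thresholds $\rho_i'$ chosen so that a single large $\alpha_i$ dominates the sum of the negative bounds $H_j$, partitions the indices into three sets, takes $\bar{\rho} = \sqrt{\sum_i(\rho_i')^2}$, and then asserts the existence of a continuous increasing function below the (possibly discontinuous) pointwise minimum $\bar{\alpha}_\R{bound}$ over each sphere. You instead use the pigeonhole bound $\max_i\lVert\B{x}_i\rVert \ge \lVert\bar{\B{x}}\rVert/\sqrt{N}$ and write down the explicit formula $\bar{\alpha}(s) = \min_k\alpha_k(s/\sqrt{N}) - C_0$, pushing $\bar{\rho}$ out until this is non-negative. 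Your construction is arguably cleaner: it avoids the paper's final existence claim about a continuous increasing minorant of $\bar{\alpha}_\R{bound}$, which is left unconstructed there, and it gives a concrete, checkable $\bar{\alpha}$. Both arguments are valid; the paper's yields a potentially smaller $\bar{\rho}$ (it only needs one node beyond its own threshold $\rho_i'$ rather than a uniform margin over $C_0$), which matters only quantitatively for the boundedness radius in Proposition 4.
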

\begin{proof} 
    First, it is straightforward to verify that
    \begin{multline}\label{eq:first_part_lemma_disc}
    - \sum_{i = 1}^N \B{y}_i\T \B{u}_i
    = \bar{\B{y}}\T \bar{\B{u}} = \bar{\B{x}}\T \bar{\B{u}} = \\
    c \bar{\B{x}}\T (\B{L} \otimes \B{\Gamma}) \bar{\B{x}}
    + c_\R{d} \bar{\B{z}}\T ( \B{I}_{N_{\C{E}_\R{d}}} \otimes \B{\Gamma}_\R{d} ) \R{sign} (\bar{\B{z}}),
    \end{multline}
    where $N_{\C{E}_\R{d}}$ is the number of edges in $\C{G}_\R{d}$, and $\bar{\B{z}} \triangleq ( \B{B}_{\R{d}}\T \otimes \B{I}_{n} ) \bar{\B{x}}$, with $\B{B}_\R{d}$ being the incidence matrix of $\C{G}_\R{d}$.
    Simple algebraic manipulations show that the first term on the right-hand side of \eqref{eq:first_part_lemma_disc} is non-negative as $c \ge 0$ and $\R{sym}( \B{\Gamma}) \ge 0$.
    By exploiting \cite[Lemma 9]{coraggio2019achieving}, we can also conclude that the second term is non-negative as $c_\R{d} \ge 0$ and $\mu_{\infty}^-(\B{\Gamma}_\R{d}) \ge 0$.  
    To complete the proof, we need to find a scalar $\bar{\rho}$ such that, if $\left\lVert \bar{\B{x}} \right\rVert > \bar{\rho}$,
    it holds that 
    $
    \sum_{i = 1}^N h_i(\B{x}_i) \ge \bar{\alpha}(\left\lVert \bar{\B{x}} \right\rVert).
    $
Such a scalar can be found as follows.
Firstly, note that:
\begin{itemize}[leftmargin=*]
    \item for any $i \in \{1, \dots, N\}$, as $h_i$ is continuous, it is also bounded in the set $\{ \B{x}_i \in \BB{R}^n \mid \left\lVert \B{x}_i \right\rVert \le \rho_i\}$, therefore there exists a finte scalar $H_i \le 0$ such that $h_i(\B{x}_i) \ge H_i$ in that set.
    In addition, $h_i$ is non-negative by definition in $\{ \B{x}_i \in \BB{R}^n \mid \left\lVert \B{x}_i \right\rVert \ge \rho_i\}$; hence,
    \begin{equation}\label{eq:proof_step_06}
    h_i(\B{x}_i) \ge H_i,
    \quad \forall \B{x}_i \in \BB{R}^n;
    \end{equation}
    \item as all systems are strongly strictly semipassive, for each stability component $h_i$ there exists an increasing and radially unbounded function $\alpha_i$ associated to it.
    This implies that, for a given $i \in \{1, \dots, N\}$ and scalar $b$, there exists another scalar $a \ge \rho_i$ such that 
    \begin{equation}\label{eq:proof_step_10}
    \alpha_i(\left\lVert \B{x}_i \right\rVert) > b,
    \quad \text{if }\left\lVert \B{x}_i \right\rVert > a.
    \end{equation}
\end{itemize}
From \eqref{eq:proof_step_10}, there exist $N$ scalars $\rho_i' \ge \rho_i$, for $i = 1, \dots, N$, such that
\begin{equation}\label{eq:proof_step_07}
\alpha_i(\left\lVert \B{x}_i \right\rVert) > - \sum_{\substack{j = 1,j \ne i}}^N H_j,
\quad \text{if } \left\lVert \B{x}_i \right\rVert > \rho_i'.
\end{equation}
Now, define the following partition of $\{1, \dots, N \}$, whose sets are 
$\C{I}_1 \triangleq \{i \mid \left\lVert \B{x}_i \right\rVert \le \rho_i\}$,
$\C{I}_2 \triangleq \{i \mid \rho_i < \left\lVert \B{x}_i \right\rVert \le \rho_i' \}$, and
$\C{I}_3 \triangleq \{i \mid \left\lVert \B{x}_i \right\rVert > \rho_i' \}$.
Then, it is possible to write 
$\sum_{i = 1}^N h_i(\B{x}_i) = 
\sum_{i \in \C{I}_1 \cup \C{I}_2 \cup \C{I}_3} h_i(\B{x}_i)$.
Exploiting \eqref{eq:proof_step_06}, we get
$\sum_{i = 1}^N h_i(\B{x}_i) \ge
\sum_{i \in \C{I}_1} H_i
+ \sum_{i \in \C{I}_2 \cup \C{I}_3} h_i(\B{x}_i);
$
applying \eqref{eq:stability_component}, we have
$ 
\sum_{i = 1}^N h_i(\B{x}_i) \ge
\sum_{i \in \C{I}_1} H_i +
\sum_{i \in \C{I}_2 \cup \C{I}_3} \alpha_i(\left\lVert \B{x}_i \right\rVert)
$. 
Then, we define $\bar{\rho} \triangleq \sqrt{ \sum_{i = 1}^N (\rho_i')^2}$, so that 
\begin{equation}\label{eq:proof_step_08}
\left\lVert \bar{\B{x}} \right\rVert > \bar{\rho} 
\ \ \Rightarrow \ \
\exists i : \left\lVert \B{x}_i \right\rVert > \rho_i' 
\ \ \Leftrightarrow \ \
\C{I}_3 \ne \varnothing.
\end{equation}
For all $\left\lVert \bar{\B{x}} \right\rVert > \bar{\rho}$, we can exploit \eqref{eq:proof_step_07} and \eqref{eq:proof_step_08} to write that
\begin{equation}\label{eq:set_of_inequalities}
\sum_{i = 1}^N h_i(\B{x}_i) \ge
\sum_{i \in \C{I}_1} H_i +
\sum_{i \in \C{I}_2 \cup \C{I}_3} \alpha_i(\left\lVert \B{x}_i \right\rVert)  
 > 0.
\end{equation}

\begin{figure}[t]
    \centering
    \includegraphics[max width=\columnwidth]{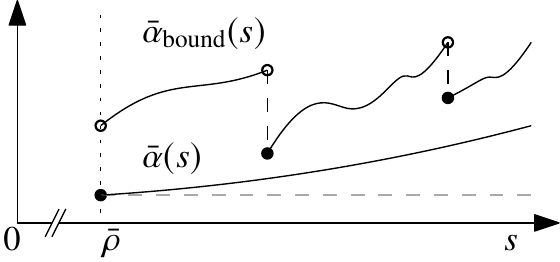}
    \caption{Example of the functions $\bar{\alpha}_{\R{bound}}$ and $\bar{\alpha}$ in the proof of Lemma \ref{lem:fix_pogromsky_discontinuous}.}
    \label{fig:alpha_functions}
\end{figure}
At this point, we define the (not necessarily continuous) positive function $\bar{\alpha}_{\R{bound}} : {} ]\bar{\rho}, +\infty[ {} \rightarrow \BB{R}_{> 0}$ given by
\begin{equation*}
\bar{\alpha}_{\R{bound}}(s) \triangleq \min_{\bar{\B{x}} : \left\lVert \bar{\B{x}} \right\rVert = s} \left(
\sum_{i \in \C{I}_1} H_i +
\sum_{i \in \C{I}_2 \cup \C{I}_3} \alpha_i(\left\lVert \B{x}_i \right\rVert)
\right) > 0.
\end{equation*}
Then, we can define a continuous increasing function $\bar{\alpha} : {} [\bar{\rho}, +\infty[ {} \rightarrow \BB{R}_{\ge 0}$ that satisfies
\begin{equation}
\begin{aligned}
\text{(i)}& \qquad 0 < \bar{\alpha}(s) \le 
\bar{\alpha}_\R{bound}(s), \quad \text{if } s > \bar{\rho}, \\
\text{(ii)}& \phantom{ \qquad 0 < {}} \bar{\alpha}(\bar{\rho}) = \lim_{s \searrow \bar{\rho}} \bar{\alpha}(s);
\end{aligned}
\end{equation}
see Fig. \ref{fig:alpha_functions} for an illustration of $\bar{\alpha}$ and $\bar{\alpha}_\R{bound}$.
From \eqref{eq:set_of_inequalities}, 
\begin{equation}\label{eq:proof_step_09}
\sum_{i = 1}^N h_i(\B{x}_i) \ge \bar{\alpha}(\left\lVert \bar{\B{x}} \right\rVert),
\quad \text{if }\left\lVert \bar{\B{x}} \right\rVert \ge \bar{\rho},
\end{equation}
which, since \eqref{eq:first_part_lemma_disc} is non-negative, proves the Lemma.%
\end{proof} 

\begin{proposition}\label{pro:bounded_jacobian_quad}
    If a function $\B{f} : \BB{R}^n \rightarrow \BB{R}^n$ has an upper bounded Jacobian in $\Omega \subseteq \BB{R}^n$, in the sense that for all $\B{x} \in \Omega$
    \begin{equation}
    {\partial f_i(\B{x})}/{\partial x_i} \le S_{ii},
    \qquad 
    \left\lvert {\partial f_i(\B{x})}/{\partial x_j} \right\rvert \le S_{ij}, \ i \ne j,
    \end{equation} 
    for $S_{ij} \in \BB{R}_{\ge 0}$, $i,j = 1, \dots, n$, then $\B{f}$ is QUAD($\B{I}$, $\B{Q}$) in $\Omega$, with $\B{Q}$ being diagonal and $Q_{ii} = S_{ii} + \sum_{j = 1, j \ne i}^n (S_{ij} + S_{ji})/{2}$.
\end{proposition}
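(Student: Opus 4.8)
The plan is to exploit the integral (mean-value) form of the fundamental theorem of calculus along the segment joining the two evaluation points, and then bound the resulting quadratic form in the Jacobian entrywise. Write $\B{w} \triangleq \B{v}_1 - \B{v}_2$ and let $\B{J}(\B{x})$ denote the Jacobian of $\B{f}$, so that $J_{ij}(\B{x}) = \partial f_i(\B{x})/\partial x_j$. Assuming $\Omega$ is convex (as is the ball $\C{B}_r^\R{c}$ to which the proposition is applied), the segment $\{ \B{v}_2 + s\B{w} \mid s \in [0,1] \}$ lies in $\Omega$, so that
\[
(\B{v}_1 - \B{v}_2)\T \left[ \B{f}(\B{v}_1) - \B{f}(\B{v}_2) \right] = \int_0^1 \B{w}\T \B{J}(\B{v}_2 + s\B{w}) \, \B{w} \ \R{d}s .
\]

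First I would bound the integrand pointwise. Expanding the quadratic form gives $\B{w}\T \B{J} \B{w} = \sum_{i} J_{ii} w_i^2 + \sum_{i \ne j} J_{ij} w_i w_j$. The diagonal terms are controlled directly by the hypothesis $J_{ii} \le S_{ii}$, while for the off-diagonal terms I would use $J_{ij} w_i w_j \le \lvert J_{ij} \rvert \, \lvert w_i \rvert \, \lvert w_j \rvert \le S_{ij} \lvert w_i \rvert \, \lvert w_j \rvert$, followed by Young's inequality $\lvert w_i \rvert \, \lvert w_j \rvert \le (w_i^2 + w_j^2)/2$.

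Collecting the coefficient of each $w_k^2$ is the bookkeeping core of the argument: the diagonal supplies $S_{kk}$, while $w_k^2$ appears in the off-diagonal double sum once through the index $i = k$ (contributing $\sum_{j \ne k} S_{kj}/2$) and once through $j = k$ (contributing $\sum_{i \ne k} S_{ik}/2$). These combine to exactly $Q_{kk} = S_{kk} + \sum_{j \ne k}(S_{kj} + S_{jk})/2$, giving $\B{w}\T \B{J}(\B{v}_2 + s\B{w}) \B{w} \le \sum_k Q_{kk} w_k^2 = \B{w}\T \B{Q} \B{w}$ for every $s \in [0,1]$. Since this bound is independent of $s$, integrating over $[0,1]$ yields $(\B{v}_1 - \B{v}_2)\T [ \B{f}(\B{v}_1) - \B{f}(\B{v}_2) ] \le (\B{v}_1 - \B{v}_2)\T \B{Q} (\B{v}_1 - \B{v}_2)$, which is precisely the QUAD($\B{I}$, $\B{Q}$) inequality of Definition \ref{def:QUAD} with $\B{P} = \B{I}$.

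The main point to flag is not any single computation — the entrywise bound and the index bookkeeping are routine — but the geometric requirement that the entire segment between $\B{v}_1$ and $\B{v}_2$ stay inside $\Omega$, since the Jacobian is only assumed bounded on $\Omega$. This is automatic for convex $\Omega$ such as $\C{B}_r^\R{c}$; for a nonconvex domain the integral representation would fail and the statement would need a path-based replacement or a restriction to a convex subset. A secondary prerequisite is that $\B{f}$ be (at least) continuously differentiable on $\Omega$, so that the integral form of the fundamental theorem of calculus applies, which is consistent with the hypothesis being stated in terms of the Jacobian.
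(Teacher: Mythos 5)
Your proof is correct and follows essentially the same route as the paper's: the paper applies the mean value theorem componentwise (a separate $\lambda_i \in [0,1]$ for each row of the Jacobian) where you use the integral form of the fundamental theorem of calculus, but the entrywise bound via Young's inequality and the bookkeeping that collects the coefficient $Q_{kk} = S_{kk} + \sum_{j \ne k}(S_{kj}+S_{jk})/2$ of each $w_k^2$ are identical. Your observation that the whole segment between the two evaluation points must lie in $\Omega$ (hence that $\Omega$ should be convex, or the argument restricted to a convex subset) applies equally to the paper's mean-value-theorem version, which leaves this requirement implicit.
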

\begin{proof}
    Let us define $\B{x}, \B{\delta} \in \BB{R}^n$, so that $\B{x}, \B{x} + \B{\delta} \in \Omega$.
    From the mean value theorem, there exists $\lambda_i \in [0, 1]$ such that
    $
    f_i(\B{x} + \B{\delta}) - f_i(\B{x}) = \nabla f_i(\B{x} + \lambda_i \B{\delta}) \ \B{\delta}.
    $
    This can be rewritten as
    $f_i(\B{x} + \B{\delta}) - f_i(\B{x}) = 
    \sum_{j = 1}^n \hat{J}_{ij} \delta_j,$
    where $\hat{J}_{ij}={\partial f_i (\B{x} + \lambda_i \B{\delta})}/{\partial x_j}$,
    which, multiplying both sides by $\delta_i$, yields
    \begin{equation}\label{eq:proof_quad_bound_01}
    \delta_i \cdot [f_i(\B{x} + \B{\delta}) - f_i(\B{x})] = 
    \sum_{j = 1}^n \hat J_{ij}  \delta_i \delta_j.
    \end{equation}
    Summing \eqref{eq:proof_quad_bound_01} for $i = 1, \dots, n$, we have
    \begin{equation}
    \B{\delta}\T [\B{f}(\B{x} + \B{\delta}) - \B{f}(\B{x})]
    = \sum_{i = 1}^n \hat{J}_{ii} \delta_i^2 + 
    \sum_{i = 1}^n \sum_{j = 1, j \ne i}^n \hat{J}_{ij} \delta_i \delta_j.
    \end{equation}
    Recalling the expression of the square of a binomial and the bounds on the Jacobian, it holds that
    $
    \hat{J}_{ij} \delta_i \delta_j \le
    \left\lvert \hat{J}_{ij} \delta_i \delta_j \right\rvert \le 
    \left\lvert \hat{J}_{ij} \right\rvert ( \delta_i^2 + \delta_j^2 )/2
    \le S_{ij} ( \delta_i^2 + \delta_j^2 )/2.
    $
    Then, letting $Q_{ii} = S_{ii} + \sum_{j = 1, j \ne i}^n (S_{ij} + S_{ji})/2$, we have
    \begin{multline}
    \B{\delta}\T [\B{f}(\B{x} + \B{\delta} - \B{f}(\B{x})] \le
    \sum_{i = 1}^n \hat{J}_{ii} \delta_i^2 + 
    \sum_{i = 1}^n \sum_{j = 1, j \ne i}^n \frac{S_{ij}}{2} \left( \delta_i^2 + \delta_j^2 \right) \\
    \le \sum_{i = 1}^n S_{ii} \delta_i^2 + 
    \sum_{i = 1}^n \sum_{j = 1, j \ne i}^n \frac{S_{ij}}{2} \left( \delta_i^2 + \delta_j^2 \right) 
    \le
    \sum_{i = 1}^n Q_{ii}
    \delta_i^2.
    \end{multline}
    Defining $\B{y} \triangleq \B{x} + \B{\delta}$, the thesis follows. 
\end{proof}

\end{document}